
\documentclass[conference]{IEEEtran}

\ifCLASSINFOpdf
   \usepackage[pdftex]{graphicx}
\else
\fi

\usepackage{amsmath, amsthm, amssymb}
\usepackage{comment}
\usepackage{verbatim}
\usepackage{setspace}
\usepackage{graphicx}
\newtheorem*{thma}{Lemma}

\hyphenation{op-tical net-works semi-conduc-tor}

\begin{document}
%
\title{Improving the GPU space of computation under triangular domain problems.}

\author{\IEEEauthorblockN{Crist\'obal A. Navarro}
\IEEEauthorblockA{Computer Science department\\
University of Chile\\
Chile\\
Email: crinavar@dcc.uchile.cl}
\and
\IEEEauthorblockN{Nancy Hitschfeld}
\IEEEauthorblockA{Computer Science department\\
University of Chile\\
Email: nancy@dcc.uchile.cl}}

\maketitle
\begin{abstract}
There is a stage in the GPU computing pipeline where a grid of thread-blocks is mapped to the problem domain. 
Normally, this grid is a $k$-dimensional bounding box that covers a $k$-dimensional problem no matter its shape. Threads that 
fall inside the problem domain perform computations, otherwise they are discarded at runtime.
For problems with non-square geometry, this is not always the best idea because part of the space of computation is executed without any practical use.
Two-dimensional triangular domain problems, alias \textit{td-problems}, are a particular case of interest. 
Problems such as the Euclidean distance map, LU decomposition, collision detection and simulations over triangular tiled domains are all \textit{td-problems} and they appear frequently in many 
areas of science. 
In this work, we propose an improved GPU \textit{mapping function} $g(\lambda)$, that maps any $\lambda$ block to a unique location $(i,j)$ in the triangular domain.
The mapping is based on the properties of the lower triangular matrix and it works at a block level, thus not compromising thread organization within a block. 
The theoretical improvement from using $g(\lambda)$ is upper bounded as $I < 2$ and the number of wasted blocks is reduced from $\mathcal{O}(n^2)$ to $\mathcal{O}(n)$. 
We compare our strategy with other proposed methods; the \textit{upper-triangular mapping} (UTM), the \textit{rectangular box} (RB) and the \textit{recursive partition} (REC). 
Our experimental results on Nvidia's Kepler GPU architecture show that $g(\lambda)$ is between 12\% and 15\% faster than the bounding box (BB) strategy. When compared to the other strategies, 
our mapping runs significantly faster than UTM and it is as fast as RB in practical use, with the advantage that thread organization is not compromised, as in RB. This work also contributes at presenting, 
for the first time, a fair comparison of all existing strategies running the same experiments under the same hardware.
\end{abstract}

\IEEEpeerreviewmaketitle

\section{Introduction and Related work}
\label{sec_introduction}
GPU computing is without question an important research area \cite{citeulike:2767438, Nickolls:2010:GCE:1803935.1804055} since the release 
of general purpose computing platforms such as CUDA \cite{nvidia_cuda_guide} and OpenCL \cite{opencl08}. 
For every GPU application, there is a stage where a grid of blocks, also known as the \textit{space of computation}, 
is mapped to a problem domain to eventually solve it.
This mapping can be defined as a function $f(x): R^k \rightarrow R^p$ that transforms each $k$-dimensional point 
$x=(x_1, x_2, ..., x_k)$ of the grid to a unique $p$-dimensional point of the problem domain. 
In other words, $f(x)$ \textit{maps the space of computation to the problem domain}. 

When the problem domain is simple in shape, rectangular or square grids are good choices because the bounding box matches exactly the domain. 
Rectangular or square grids are the most used ones and they are characterized for using the bounding box strategy (BB), where $f(x) = x$.

There are also other types of problems that do not match a box shaped domain because they have a different geometry. 
More in detail, some 2D problems have a triangular shaped domain. We call these type of problems \textit{triangular-domain-problems} 
or simply \textit{td-problems}. Building a square grid for a \textit{td-problem} is not the best choice 
because entire blocks of computation, containing hundreds of threads, are wasted and discarded at runtime, leading to a cost in 
performance. The scenario is illustrated in Figure \ref{fig_bb_strategy}.
\begin{figure}[ht!]
\centering
\includegraphics[scale=0.14]{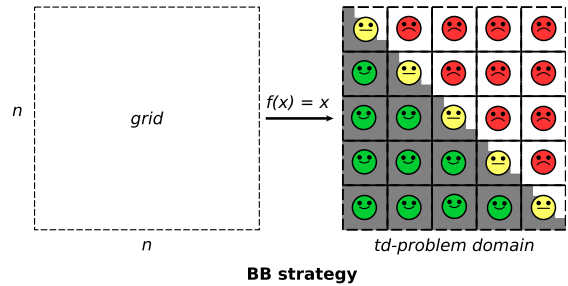}
\caption{The BB strategy is not the best choice for \textit{td-problem}.}
\label{fig_bb_strategy}
\end{figure}

Many computational problems are in fact \textit{td-problems}; 
simulations over triangular tilings (e.g John Conway's Game of life \cite{ConwaysLife}), collision detection tables, LU decomposition, 
graph adjacencies and Euclidean distance maps (EDM), among others. The problem of improving the space of computation is important, 
because each contribution in this matter will have an impact on every \textit{td-problem}.
Related work in the field of distance maps has proposed GPU implementations for parallel computation of DNA sequence distances \cite{Ying:2011:GDD:2065356.2065583} which 
is based on EDM. In their work, Ying et al. mention that the problem domain is indeed symmetric and they do realize that only 
the upper or lower triangular part of the interaction matrix requires computation. Li et al. \cite{Li:2010:CME:1955604.1956601}
have also worked on GPU-based EDMs on large data and have also identified the symmetry involved in the computation. 
In both works, there has not been a proposal for a strategy regarding the mapping of the grid to the problem domain. 
Packed data structures have been proposed in the field of lineal algebra for representing triangular and symmetric matrices 
with applications to LU and Cholesky decomposition \cite{springerlink_gustavson}. Jung \textit{et. al.} \cite{Jung2008} proposed the \textit{rectangular box strategy} 
for accessing and storing a triangular matrix (upper of lower). 
As a result, they achieve data structures with practically half the size with respect to classical methods based on the full matrix. 
In 2009, Ries et al. contributed with a parallel GPU method for the triangular matrix inversion \cite{Ries:2009:TMI:1654059.1654069}. 
The authors identify that the space of computation can be improved by using a \textit{recursive partition} of the grid, based on \textit{divide and conquer}. 
In 2012, Q. Avril \textit{et. al.} \cite{AvrilGA12} proposed 
a mapping function that given a thread id $k$, it computes the coordinates $(a, b)$, based on the properties of 
the \textit{upper-triangular} section of a symmetric matrix. The authors mention that they use Carmack's and Lomont's 
fast square root approximation (based on the Newton-Raphson approximation algorithm \cite{Peelle:1974:TNS:585882.585889}) for speeding up the mapping function. The authors 
also mention that all approximation errors can be fixed by using only two conditionals statements. 
The motivation behind our work is similar to the one of Q. Avril \textit{et. al.}, but it uses the idea of the \textit{lower-triangular matrix} instead of the upper one, 
and instead of mapping threads, we map blocks. These two differences are critical for achieving a simpler, 
faster and exact mapping function for $N < 30,000$. 

Up to date, there has not been a dedicated comparison of the different strategies proposed for improving the space of computation. In the best case, we can find a comparison against the BB strategy 
\cite{AvrilGA12} but the authors did not give details if the BB kernel was optimized or not. For example, 
filtering by block coordinates whenever is possible is faster than filtering by the thread id. 
In this paper we address this lack of comparisons and present for the first time results for all strategies running the same tests under the same hardware.

The rest of the manuscript includes a formal definition of $g(\lambda)$ (section \ref{sec_td_strategy}), details about its implementation and how we chose the best square root function 
(section \ref{sec_implementation}). In section \ref{sec_experimental_results} we present experimental results for 
all existing strategies. All strategies are tested under the same conditions; to execute a dummy kernel and a kernel for computing 
the EDM using $1$, $2$, $3$ and $4$ features. Both kernels run in the range $N \in [1024, 30720]$ with $N$ a multiple of $1024$. 
Results, advantages and disadvantages are discussed in section \ref{sec_conclusions}.

\section{The mapping function}
\label{sec_td_strategy}
\subsection{Formulation}
\label{sec_block_mapping_function}
Let $A$ be a \textit{td-problem} of size $N(N+1)/2$,  $n = \lceil N/\rho\rceil$ the number 
of blocks needed to cover the data along a dimension and $\rho$ the number of threads per block per dimension, or 
\textit{dimensional blocksize}. 
A BB strategy would simply build a square grid, namely $G_{BB}$, of $n$x$n$ blocks and put conditional instructions to cancel the computations 
outside the problem domain. A finer analysis tells that $n(n+1)/2$ blocks are sufficient to cover the problem domain of $A$. 
These blocks can be indexed in the following way:
\begin{equation}
A = 
\begin{vmatrix}
0	&		&		&				&					\\
1	&	2	&		&				&					\\
3	&	4	&	5	&				&					\\	
...	&	... 	&	...	&	...			& 					\\
\frac{n(n-1)}{2}	&	\frac{n(n-1)}{2} + 1 	&	...	&	...	& 	\frac{n(n+1)}{2}-1\\
\end{vmatrix}
\label{eq_matrix_indexing}
\end{equation}
The idea is to first build a two-dimensional balanced grid, namely $G_{LTM}$, that will contain all $n(n+1)/2$ blocks. 
By balanced, we mean that the size per dimension of the grid must be $n' = \lceil\sqrt{n(n+1)/2}\rceil$ per dimension. 
For the rest of the paper we will name our method as LTM for \textit{lower triangular mapping}.
Figure \ref{fig_ltm_strategy} illustrates $G_{LTM}$ and how it is smaller than $G_{BB}$ (from Figure \ref{fig_bb_strategy}) 
while still providing the necessary amount of blocks to cover the problem domain.
\begin{figure}[ht!]
\centering
\includegraphics[scale=0.14]{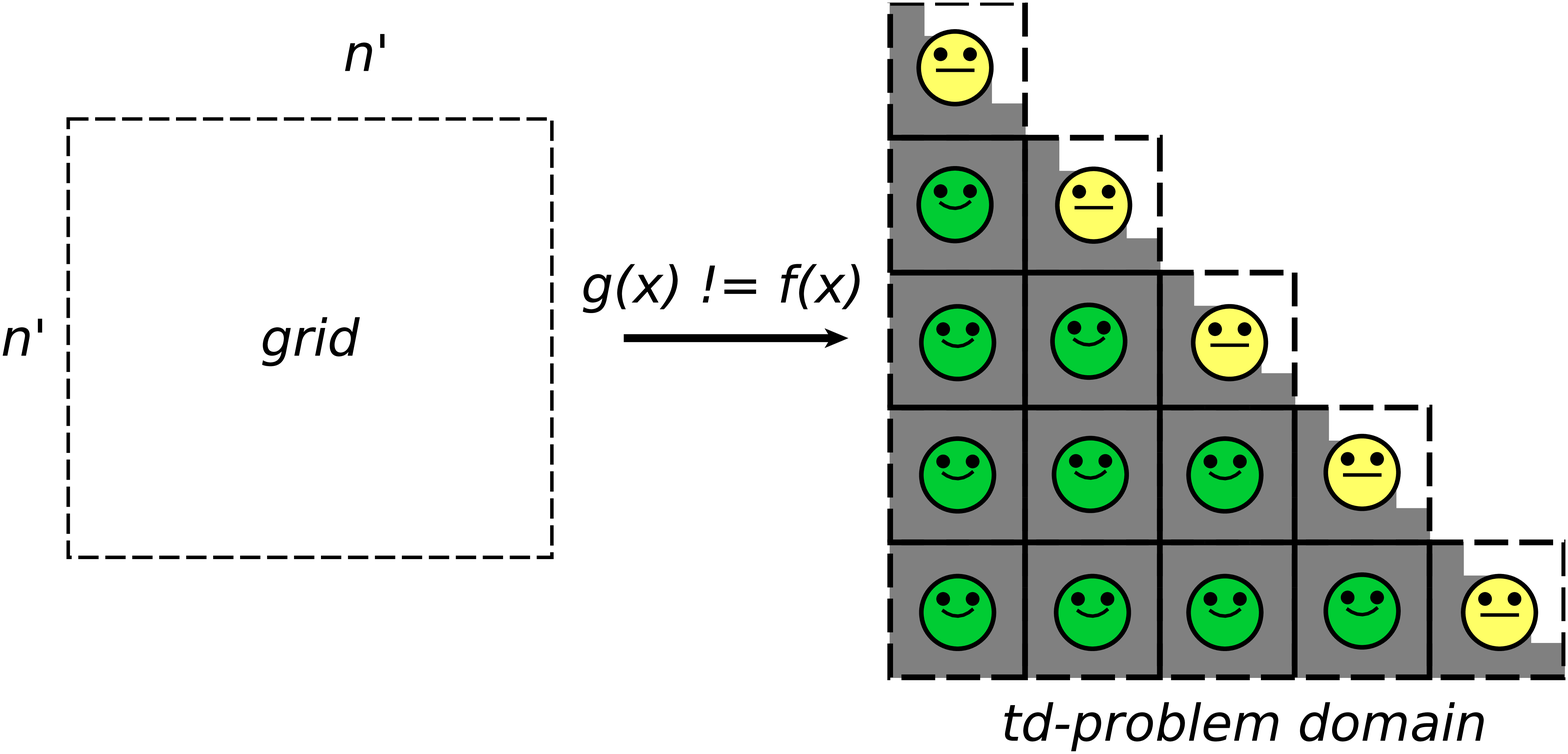}
\caption{The LTM strategy uses just the necessary amount of blocks to cover the problem domain.}
\label{fig_ltm_strategy}
\end{figure}
The result is a reduction from $n(n-1)/2 \in \mathcal{O}(n^2)$ to $n/2 \in \mathcal{O}(n)$ wasted blocks. In other words, conditional 
statements will only occur when the block lies on the diagonal, in order to filter the threads in the upper part.

The next step is to formulate $g(\lambda) = (i,j)$ where $(i,j)$ are the coordinates in problem space and 
$\lambda$ is the index of the block $B_{x,y}$ computed as $\lambda = x + yn'$ in grid space. 
\begin{thma}
	For any block $B_{x,y}$ with index $\lambda = x + yn'$ and $\lambda \in [0, n(n+1)/2-1]$, 
	its corresponding $i,j$ coordinates are computed with a mapping function $g(\lambda)$:
	\begin{align}
	g(\lambda) = (i,j) = \Big(\Big\lfloor\sqrt{\frac{1}{4} + 2\lambda} - \frac{1}{2}\Big\rfloor, \lambda - i(i+1)/2\Big)
	\label{eq_theorem}
	\end{align}
\end{thma}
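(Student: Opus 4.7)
The plan is to peel off the two coordinates of $g(\lambda)$ separately: first recover the row index $i$ from $\lambda$, then obtain $j$ as a simple offset within that row. The key observation comes directly from the indexing of equation~(\ref{eq_matrix_indexing}): row $i$ of the lower triangular block array contains exactly $i+1$ entries, and its leftmost block has index $T_i := i(i+1)/2$, the $i$-th triangular number. Consequently the row containing block $\lambda$ is characterized uniquely by the double inequality
\begin{equation*}
\frac{i(i+1)}{2} \le \lambda < \frac{(i+1)(i+2)}{2}.
\end{equation*}

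The main algebraic step is to invert the quadratic $i^2 + i - 2\lambda \le 0$ via the quadratic formula, keeping the positive branch, which yields the closed-form upper bound $i \le \sqrt{\tfrac{1}{4}+2\lambda} - \tfrac{1}{2}$. Applying the same manipulation to the right-hand inequality (i.e.\ setting $k=i+1$ in $k^2+k > 2\lambda$) produces the matching lower bound $i > \sqrt{\tfrac{1}{4}+2\lambda} - \tfrac{3}{2}$. Together these bounds enclose $i$ in a half-open interval of length exactly one, so there is a unique integer solution, given by the floor of the upper bound. This matches the first coordinate of $g(\lambda)$. The second coordinate then follows immediately, since $j$ is simply the offset of $\lambda$ within its row: $j = \lambda - T_i = \lambda - i(i+1)/2$.

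The main subtlety I anticipate is the floor behaviour at the boundary cases $\lambda = T_i$ and $\lambda = T_{i+1}-1$, where the argument $\sqrt{\tfrac{1}{4}+2\lambda}-\tfrac{1}{2}$ may land exactly on an integer. A quick direct check at $\lambda = T_i$ gives $\sqrt{\tfrac{1}{4} + i(i+1)} - \tfrac{1}{2} = \sqrt{(i+\tfrac{1}{2})^2} - \tfrac{1}{2} = i$, whose floor is $i$, which rules out any off-by-one error at the critical points. The remaining bookkeeping, i.e.\ verifying that $\lambda \in [0, n(n+1)/2 - 1]$ translates into $i \in [0, n-1]$ and $j \in [0, i]$ so that $(i,j)$ indeed lies in the lower triangular region, is then straightforward from the bounds established above.
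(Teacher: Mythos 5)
Your proof is correct and follows essentially the same route as the paper: characterize the row by the triangular-number inequality $i(i+1)/2 \le \lambda < (i+1)(i+2)/2$, invert the quadratic to get $i = \lfloor\sqrt{1/4+2\lambda}-1/2\rfloor$, and recover $j$ as the offset $\lambda - i(i+1)/2$. Your explicit sandwich of the integer $i$ in a unit half-open interval is in fact slightly more careful than the paper's appeal to the continuous solution $x$ with $i=\lfloor x\rfloor$, but it is the same argument in substance.
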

\begin{proof}[Proof]
	Block index $\lambda$ can be written as:
	\begin{equation}
	\lambda = \sum_{k=1}^{x}k
	\label{eq_packed_adyacency_sum_c}
	\end{equation}
	The index of the far left block that lies on the same row of the $\lambda$-th block corresponds to the sum in the range $[1,i]$. 
	Similarly, the index of the far left block of the $(i+1)$-th row is 
	a sum in the range $[1,i+1]$. That is, for all $\lambda$ indices under the same row $i$, 
	the range of the summation will lie in a range $[1,i+\epsilon]$, where $\epsilon < 1$. With this observation, $\lambda$ can be bounded:
	\begin{align}
	\sum_{k=1}^{i}k & \le \lambda < \sum_{k=1}^{i+1}k
	\label{eq_packed_adyacency_sum}
	\end{align} 
	Using (\ref{eq_packed_adyacency_sum_c}) in (\ref{eq_packed_adyacency_sum}), we get that $x \in [i,i+1)$ therefore $i = \lfloor x\rfloor$.\\
	Equation (\ref{eq_packed_adyacency_sum_c}) can be re-written as
	\begin{equation}
	\lambda = \frac{x(x+1)}{2} 
	\label{eq_packed_adyacency_sum_c_2}
	\end{equation}
	which is a second order equation with $a=1$, $b=1$ and $c=-2$:
	\begin{equation}
	x^2 + x - 2\lambda = 0
	\label{eq_packed_adyacency_equation}
	\end{equation}
	with a positive solution of:
	\begin{equation}
	x = \frac{\sqrt{1 + 8\lambda} - 1}{2} = \sqrt{1/4 + 2\lambda} - 1/2
	\label{eq_packed_adyacency_x}
	\end{equation}
	The row of the $\lambda$-th block can now be computed as:
	\begin{equation}
	i = \lfloor x \rfloor = \Big\lfloor \sqrt{1/4 + 2\lambda} - 1/2 \Big\rfloor
	\end{equation} 
	Finally, $j$ is the distance from the $\lambda$-th block to the left most block in the same row:
	\begin{equation}
	j = \lambda - \frac{i(i+1)}{2}
	\label{eq_packed_adyacency_j}
	\end{equation}
\end{proof}

If the diagonal is not needed, then $g(\lambda)$ becomes:
\begin{align}
g(\lambda) = (i,j) = \Big(\Big\lfloor\sqrt{\frac{1}{4} + 2\lambda} + \frac{1}{2}\Big\rfloor, \lambda - i(i+1)/2\Big)
\label{eq_theorem_nodiag}
\end{align}

When comparing LTM and UTM \cite{AvrilGA12}, we identify several differences: (1) LTM is based on the lower triangular-matrix mapping. (2) $g(\lambda)$ uses fewer floating point operations than in UTM. 
(3) LTM maps blocks and not threads as in UTM. (4) Since $g(\lambda)$ is a map of blocks and the number of blocks is $n = N/\rho$, the square root gives smaller approximation errors, allowing 
exact computation for larger values of $N$. 

\subsection{Bounds on the improvement factor}
The LTM strategy depends strongly on the square root which is asymptotically $O(M(n))$ \cite{Ypma:1995:HDN:222504.222510} where $M(n)$ is the cost of multiplying 
two numbers of $n$ digits. 
Considering that real numbers are represented by a finite number of digits 
(i.e floating point numbers with a maximum of $m$ digits), then all basic operations cost a fixed amount of
time units, leading to a constant cost $M(m) = C_s \in \mathcal{O}(1)$. All other computations are elemental arithmetic operations and 
will be taken as an additional cost of $C_a \in \mathcal{O}(1)$. The LTM strategy has a cost of 
$\tau = C_s + C_a = \mathcal{O}(1)$ for each mapping performed. 
On the other hand, the BB strategy checks for each block if $B_y <= B_x$ in order 
to know if the threads inside have to do work or not, leading to a constant cost of $\beta \in \mathcal{O}(1)$. 
For this particular case, asymptotic analysis will not give useful information about the improvement factor, since both the 
LTM and BB strategies lie in the same complexity order (i.e $\mathcal{O}(n^2)$). Therefore, 
we proceed with a finer analysis in order to find the constants involved in the improvement factor.

Let $|G_{BB}|$ and $|G_{LTM}|$ be the amount of blocks for the BB and LTM strategies, respectively, and $\rho$ the amount of threads 
per block per dimension as mentioned earlier. If we assume that $\beta$ is cheaper than $\tau$, we get that $\tau = k\beta$ with a 
constant $k \ge 1$. The improvement factor $I$ can be obtained by dividing the total cost of BB by LTM across their entire 
grids:
\begin{equation}
I = \frac{\beta|G_{BB}|\rho^2}{\tau|G_{LTM}|\rho^2} = \frac{\beta n^2}{\tau n(n+1)/2} = \frac{2\beta n^2}{\tau n^2 + \tau n}
\label{eq_I}
\end{equation}
As shown in (\ref{eq_I}), the improvement does not depend on the threads, but instead, on the blocks. For large $n$, $I$ becomes:
\begin{equation}
I = \frac{2\beta n^2}{\tau n^2 + \tau n} \approx \frac{2\beta}{\tau}
\label{eq_I_bounded}
\end{equation}
If $\tau \ge 2\beta$, performance is equal to or worse than that of BB. 
A real improvement is achieved when:
\begin{equation}
\beta \le \tau < 2\beta
\label{eq_tau_range}
\end{equation}
By using the relation $\tau = k\beta$ in (\ref{eq_I_bounded}) we get that:
\begin{equation}
I \approx 2/k
\end{equation}
Since $k > 1$, the range of $I$ is:
\begin{equation}
0 < I < 2
\end{equation}
Any value in the range $1< I < 2$ means an improvement in performance and a value in the range $0 < I < 1$ will mean a 
slowdown respect to the BB strategy.
Constant $k$ can be interpreted as the cost and overhead of the mapping function. 
A value of $k\approx1$ means that the maximum possible improvement is achieved; $I_{max} \approx 2$, under large $n$.
In practice, a value of $k\approx1$ is too optimistic and will not occur in practice. Our hypothesis is that actual hardware could give a value in the range 
$1.5 \le k < 2.0$ which would correspond to $1.00 < I \le 1.33$. Any value of $k \ge 2$ will lead 
to no improvement at all, resulting in slower performance than the BB strategy. 
It is important to put emphasis on the fact that $C_a$ (arithmetic operations) will not have much room for optimization as $C_s$. 
Therefore, getting the maximum possible value of $I$ will finally depend on how small is $C_s$, which is the square root.

\section{Implementation}
\label{sec_implementation}
\subsection{Choosing the best implementation for LTM}
The performance of the LTM strategy depends strongly on how fast the computation of index $i$ is. More precisely, the 
computation of the square root as mentioned earlier. 
For this, we made three implementations of the LTM strategy and tested them against the BB strategy in order to keep the fastest one. 
The first one, named 'LTM-X', uses the default $sqrtf(x)$ function from CUDA; this is the simplest one. 

The second implementation of LTM, named 'LTM-N', computes the square root by using three iterations of the Newton-Raphson method 
\cite{Ypma:1995:HDN:222504.222510, Peelle:1974:TNS:585882.585889}. More in detail, we use the implementation of Carmack and Lomont. 
This implementation became famous because it has proved to be effective for applications that allow small errors.
The initial value used is the magic number '0x5f3759df' (this initial guess became known when 'Id Software' released Quake 3 source code back in the year 2005). 
We added a constant of $\epsilon = 10^{-4}$ to the computation of $i$ to automatically repair the floating point point error. 
With this small change, the computation of the $i$ coordinate becomes exact for the range $N \in [0, 30720]$.

The third implementation, named 'LTM-R', uses the hardware implemented reciprocal square root, $rsqrtf(x)$:
\begin{equation}
\sqrt{x} = \frac{x}{\sqrt{x}} = x\ rsqrtf(x) 
\end{equation}
This implementation is as simple as LTM-X, with the only difference that it adds $\epsilon = 10^{-4}$ just like in LTM-N.

We measured the improvement factor of each implementation, running a dummy kernel that only computes the $i,j$ indices and writes 
the sum $i+j$ to a constant location in memory. It is necessary to perform a memory write using the coordinates, otherwise the compiler 
can optimize the code removing part of the mapping cost. 
Figure \ref{fig_results_map} shows the improvement factor between BB and LTM as well as a comparison on the amount of wasted blocks.
\begin{figure*}[ht!]
\centering
\begin{tabular}{cc}
\includegraphics[scale=0.7]{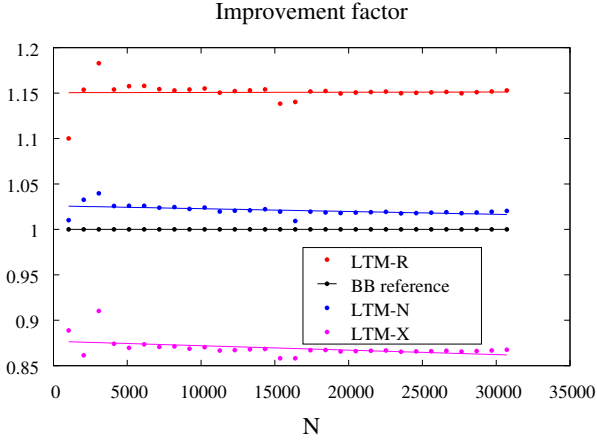} & \includegraphics[scale=0.7]{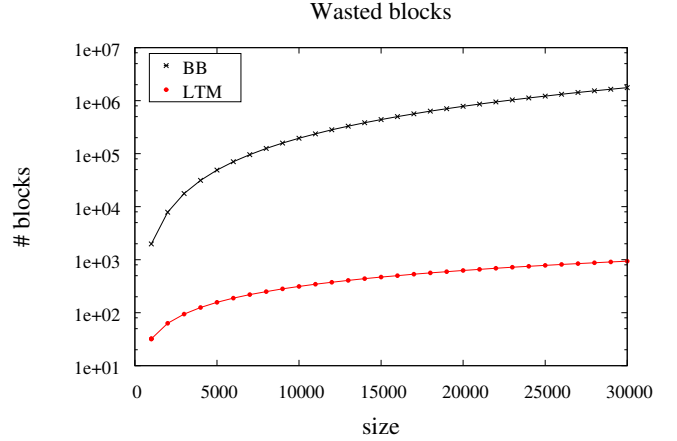}
\end{tabular}
\caption{Only the LTM-R implementation runs faster than the BB strategy. On the right, the number of wasted blocks a a function of $N$.}
\label{fig_results_map}
\end{figure*}

From the results, we observe that LTM-X is slower than BB, only achieving $I \approx 0.7$. LTM-N achieves an improvement of $I \approx 1.03$ which is only a little better than BB.
LTM-R achieved a value of $I\approx1.15$. From the results we observe that using the inverse square root is the best option, thus we keep LTM-R.

We also tried an alternative version of LTM using a lookup table for the mapping, but performance was as slow as using 
the default square root $sqrtf(x)$, even when the Kepler architecture has an efficient mechanism for accessing a common value 
for all threads (uniform-load feature) within a block. We did not include these results because memory limitations did not allow us to 
experiment in the full range of $N \in [1024, 30720]$. It is important to mention that the implementation of the LTM strategy worked fast for 
the Kepler architecture but not for Fermi (previous architecture), which achieved $I < 1$ in all cases. 
\subsection{Implementing the other strategies}
Apart from LTM, we also implemented BB, RB, REC and UTM strategies following the details given by the authors. However, we added the following restriction: 
the mapping cannot use the global memory of the GPU. This means no auxiliary array and no lookup tables. A couple of constants are allowed though. 
We made this restriction because we are considering practical use of these strategies, that is, to dedicate the entire GPU memory for the \textit{td-problem} being solved.

For the \textit{bounding box} (BB) strategy, we make blocks above the diagonal to be discarded immediately, without needing to compute a thread coordinate. This is done by checking the following: 
if $B_x > B_y$ is true, then the thread returns. For the rest of the threads, the coordinate is computed. The condition $i > j$ is still performed to discard threads above the diagonal, where $B_x = B_y$. 
This implementation of BB is faster than computing the thread coordinate for every block and filtering afterwards.

The \textit{rectangular box} (RB) strategy is based on the work of Jung \textit{et. al.} \cite{Jung2008}. This method takes the sub-triangular portion of the threads where $t_x > N/2$, rotates it CCW and 
places it above the diagonal to form a vertical rectangular grid (see Figure \ref{fig_other_methods}). In the original work, the authors map the thread coordinates with the help of a texture. In this case, 
we perform the coordinate mapping arithmetically without using any texture. All threads below the diagonal just need to do $i = t_y-1$, while $j$ remains the same. 
Threads on or above the diagonal must compute $i = N - t_y - 1$ and $j = N - i - 1$. This mapping is for even values of $N$. The case of odd $N$ is the same idea but partitioning at $\lfloor N/2 \rfloor$.

The \textit{recursive partition} (REC) strategy was proposed for the GPU by \cite{Ries:2009:TMI:1654059.1654069} \textit{et. al.}. In this method, the size of the problem is defined 
as $N = m2^k$ where $k$ and $m$ are positive integers and $m$ is a multiple of $\rho$ (the blocksize). The idea is do a binary \textit{bottom-up} recursion of $k$ levels, similar to \textit{merge-sort} (see Figure \ref{fig_other_methods}). 
At each level, a grid of blocks is launched for parallel execution. Their method requires an additional pass for computing the blocks at the diagonal. The details of how the grid is built and how blocks are distributed are well explained in 
\cite{Ries:2009:TMI:1654059.1654069}. 
In the original work, the mapping of blocks to their respective locations at each level is achieved by using a lookup table stored in constant memory. 
In this work, we do the mapping at runtime as with RB. 
\begin{figure}[ht!]
\centering
\includegraphics[scale=0.18]{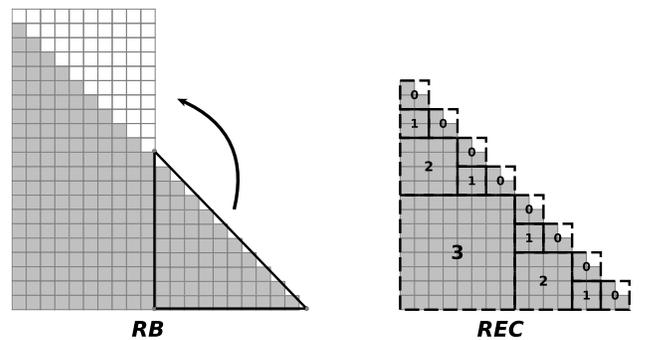}
\caption{The RB and REC strategies.}
\label{fig_other_methods}
\end{figure}

The \textit{upper-triangular mapping} (UTM) was proposed by Avril \textit{et. al.} \cite{AvrilGA12} for performing efficient collision detection on the GPU. 
This method is very similar to LTM. Given a problem size $N$, 
and a thread index $k$, its unique pair $(a, b)$ is given by $a = \lfloor \frac{-(2n + 1) + \sqrt{4n^2 - 4n - 8k +1}}{-2} \rfloor$ and $b = (a+1) + k - \frac{(a-1)(2n-a)}{2}$. This strategy uses the idea 
of mapping to the upper-triangular matrix without the diagonal. Mapping to the upper-triangular matrix still allows solving lower-triangular shaped domains, and \textit{vice-versa} via transposition.

\section{Experimental results}
\label{sec_experimental_results}
Our experimental design consists of measuring the performance of LTM-R and compare it against all existing strategies; 
\textit{bounding box} (BB), \textit{upper-triangular mapping} (UTM) \cite{AvrilGA12}, \textit{rectangular box} (RB) \cite{Jung2008} and 
the \textit{recursive partition} (REC) \cite{Ries:2009:TMI:1654059.1654069}. We checked the output for each strategy to be always correct and the same, for all cases.
Two tests are performed to each strategy; (1) the dummy kernel and (2) the EDM kernel. The dummy kernel just adds the coordinates 
and saves the result into a fixed memory location, the implementation is as simple as possible. 
The purpose of the dummy kernel is to measure just the cost of the strategy and not the problem being solved. 
Test (2) consists of a real problem; to compute the Euclidean distance matrix (EDM) using one, two, three, and four features. The purpose 
of the second test is to measure what is the performance of all strategies when solving a real problem. Testing with different number of 
features will give an idea on what is the behavior when increasing the work per thread.

The hardware used for all experiments is shown in table (\ref{table_hardware}).
\begin{table}[ht!]
\caption{Hardware used for experiments.}
\begin{center}
\begin{tabular}{|c|l|}
\hline
Component	&	Description\\ \hline
CPU		&	Intel(R) Core(TM) i7-3770K @ 3.50GHz \\
RAM		&	32GB DDR3 1333MHZ\\
GPU		&	Geforce GTX 680 (2GB, 1536 cores) \\
API		&	CUDA 5.0 \\ \hline
\end{tabular}
\end{center}
\label{table_hardware}
\end{table}

The results for the dummy kernel, EDM-1D, EDM-4D and the improvement behavior are shown in Figure \ref{fig_all_results}.
\begin{figure*}[ht!]
\centering
\begin{tabular}{cc}
\includegraphics[scale=0.7]{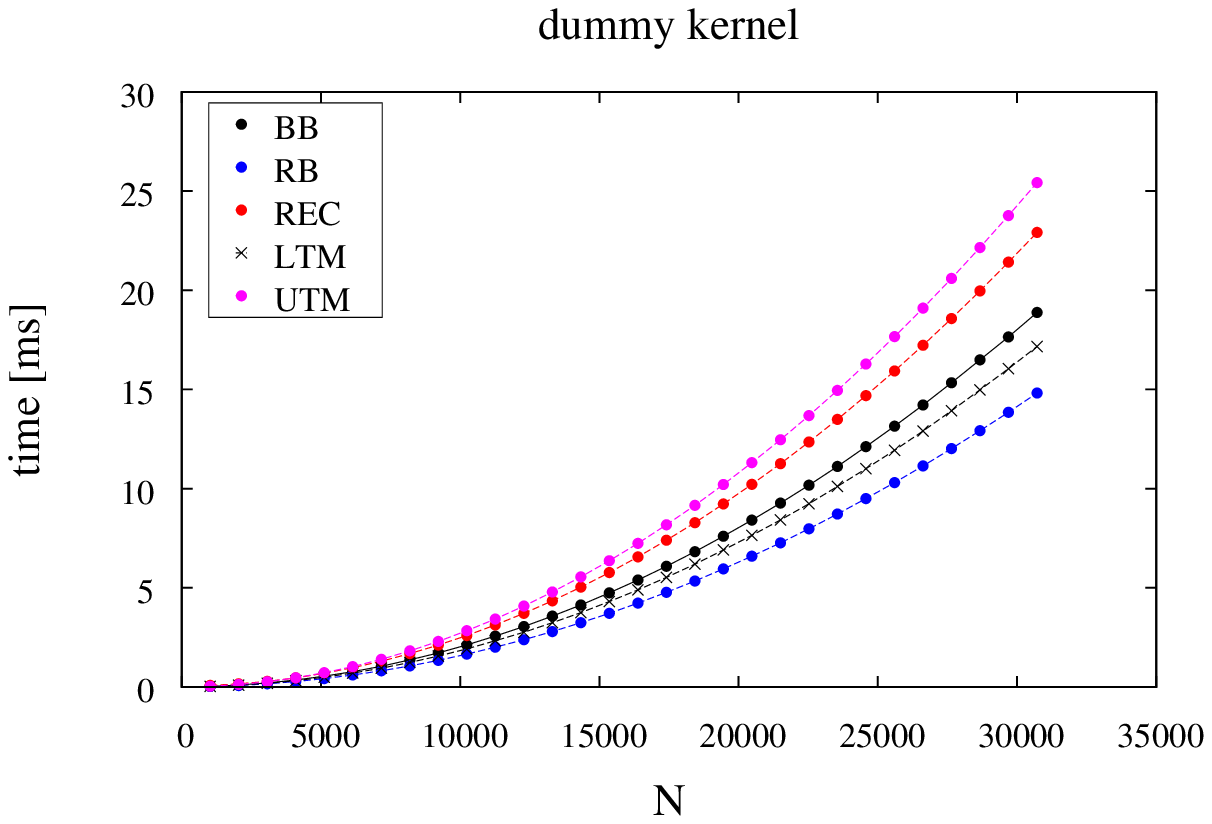} & \includegraphics[scale=0.7]{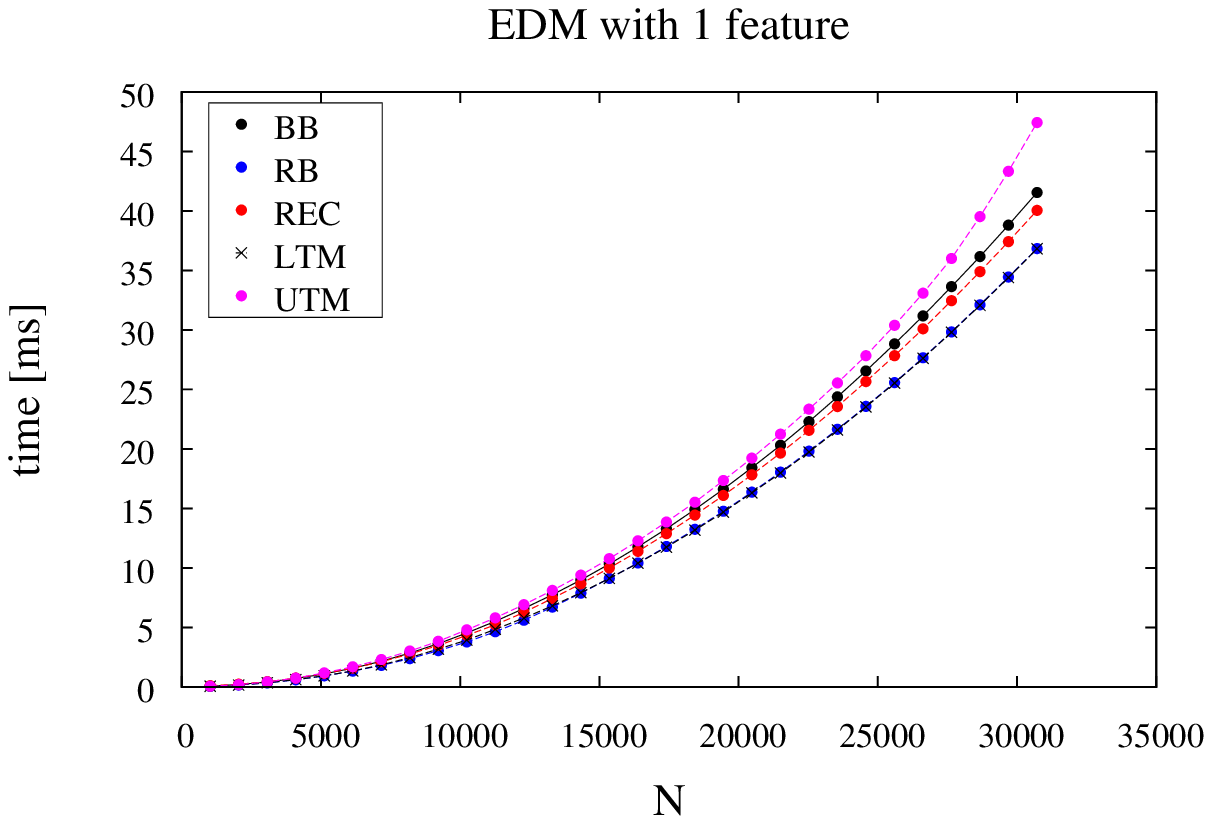}\\
\includegraphics[scale=0.7]{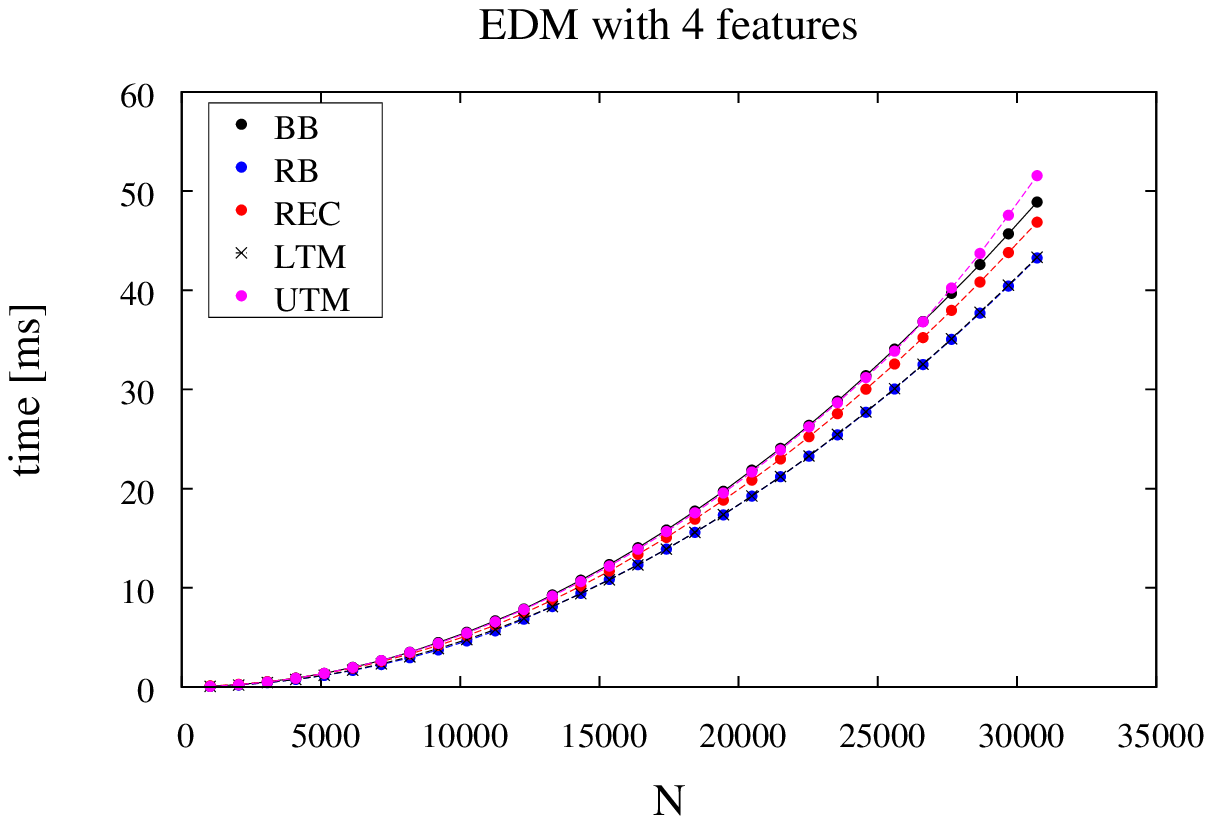} & \includegraphics[scale=0.7]{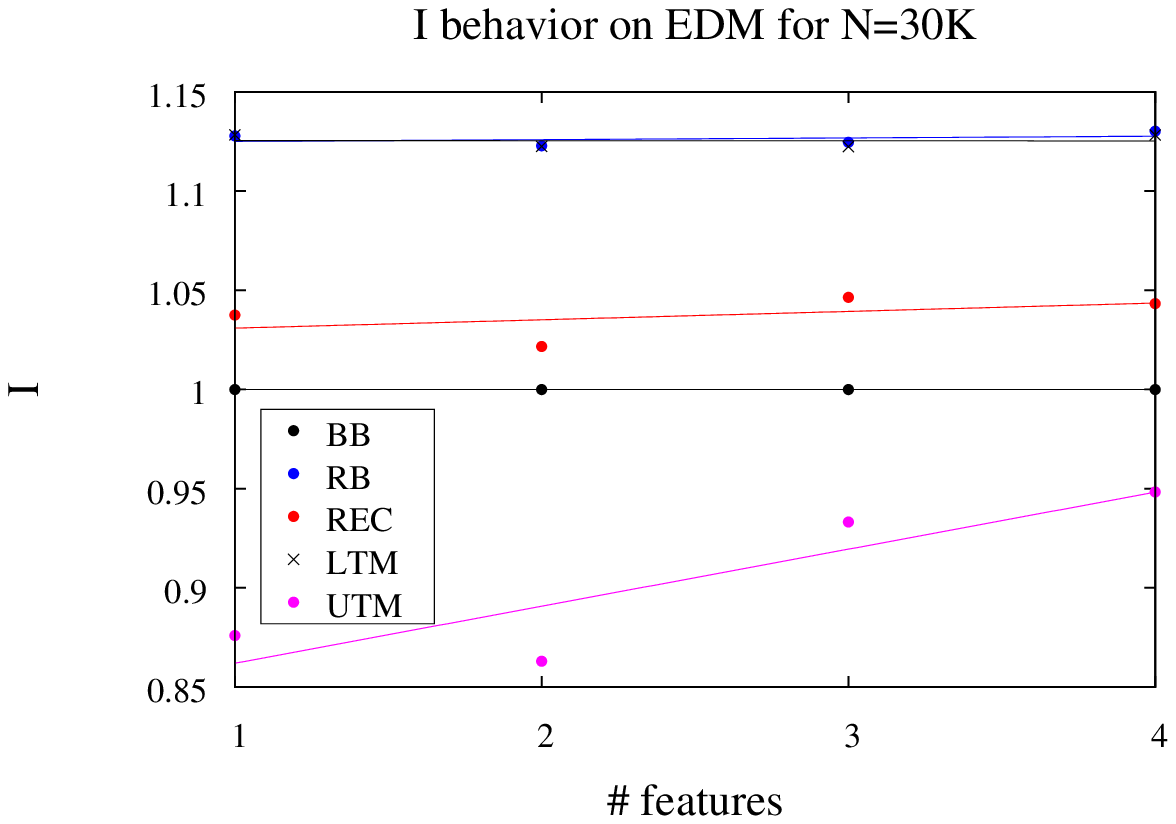}
\end{tabular}
\caption{Results for the dummy kernel, EDM and the behavior of $I$ when scaling the number of features.}
\label{fig_all_results}
\end{figure*}

Results on the dummy kernel show that the RB strategy is the fastest one achieving up to 25\% of improvement over BB. LTM comes in the second place achieving approximately 13\% of improvement over BB. 
The REC and UTM strategies gave an unexpected result; they performed slower than BB for the whole range of $N$.

The EDM problem is solved by computing the Euclidean distance $d_{ij}$ for all pairs. The distance between a pair of elements 
$a_i$, $a_j$ is computed as:
\begin{equation}
f(i,j) = \sqrt{\sum_{k=1}^{d}({a_i^k} - {a_j^k})^2}
\end{equation}
Where $d$ is the number of features and superscript $k$ specifies which feature of the element will be used.

From the results on EDM, we observe that the improvement of RB is reduced to an average of 14\%, practically half of what it achieved with the dummy kernel. 
LTM maintains the same performance as in the dummy kernel, within the range 12\% \~ 15\%. The performance of REC strategy now becomes faster than BB for $N>5,000$ and achieves up to 5\% of improvement 
for $N=30,720$. 
UTM achieves the slowest performance of all strategies. The behavior of UTM can only be explained by the the overhead incurred in computing the square root and repairing the coordinates 
with conditionals. We confirmed in section \ref{sec_implementation} that manual square root computation is still not fast enough. 
Results for EDM using two and three features showed similar results, therefore we decided not to include them.

In the last figure, we can see that the behavior of $I$ as a function of the number of features with $N=30,720$ does not vary significantly for LTM and RB. 
For the case of REC and UTM, the number of features does have an impact on performance, being beneficial for UTM.

\section{Conclusion}
\label{sec_conclusions}
Our main result from this work is the proposal of a function $g(\lambda)$ for mapping a grid-blocks to a any \textit{td-problem}, 
achieving an average improvement factor of $I=1.15$ with respect to the BB strategy, where the theoretical range is $0 < I < 2$. 
The reason for such improvement is the fact that the number of wasted blocks at runtime is reduced from 
$\mathcal{O}(n^2)$ to $\mathcal{O}(n)$, making the cores waste less time on unnecessary conditional instructions. 
The implementation of the LTM strategy is extremely short in code and totally detached from the problem, making it easy to adopt.
Such improvement of $I = 1.15$ was only achievable when using the inverse square root (LTM-R). We think that this technical aspect is key 
to understanding that the square root plays a critical role in the cost of the mapping function. For our case, we found that the reciprocal square root was the most convenient option. 
The $\epsilon$ value worked correctly in the range that fits on our GPU memory, that is $N \in [$1 to 30,000$]$ (using blocks of size 16x16). 
This range is already useful for practical use. For bigger ranges, $\epsilon$ must be refined or another approach must be taken 
for fixing eventual errors. As long as the error is $e \le 1$, block-level (\textit{i.e.}, non-branching) conditionals can fix the result.
When comparing our method with other state of the art strategies, we found that LTM and RB are the fastest methods up to date. Additionally, LTM does not compromise thread organization as RB, giving an 
important advantage. This means that when using shared memory and thread coarsening, the implementation will become easier using LTM rather than RB, since latter would need to 
put conditionals in order to run different code when processing threads below, on and above the diagonal.

The results obtained for REC are still promising. Even with the overhead of $k$ kernel calls, it still performs faster than BB. We think that this strategy can become even faster for GPU architectures that 
allow recursive kernel invocation. 

The results of UTM were not expected to be slower than BB. Our hypothesis is that the cause of such performance is the manual computation of the square root and the 
conditionals involved for repairing the computed value. We believe that if UTM uses the block mapping approach and the inverse square root trick, its performance will increase considerably 
and will not need conditionals for the range $N \in [1,30720]$.

In the future, GPUs will eventually become faster, each time having more 
\textit{special function units} (SFU) and FP32 units per multi-processor, speeding up the computation of the square root. 
At that point, the LTM strategy will be able to use the default $sqrt$ function and achieve an improvement factor of $I > 1$. 
As a final conclusion, we can say that improving the space of computation for \textit{td-problem} has proven to be advantageous in theory as well as in practice.

\section*{Acknowledgment}
The authors would like to thank \textit{Anonymous} for funding the PhD program of the first author. 
This work was partially supported by the project XXXX.



\bibliographystyle{IEEEtran}
\bibliography{td_hipc}

\begin{thebibliography}{10}
\providecommand{\url}[1]{#1}
\csname url@samestyle\endcsname
\providecommand{\newblock}{\relax}
\providecommand{\bibinfo}[2]{#2}
\providecommand{\BIBentrySTDinterwordspacing}{\spaceskip=0pt\relax}
\providecommand{\BIBentryALTinterwordstretchfactor}{4}
\providecommand{\BIBentryALTinterwordspacing}{\spaceskip=\fontdimen2\font plus
\BIBentryALTinterwordstretchfactor\fontdimen3\font minus
  \fontdimen4\font\relax}
\providecommand{\BIBforeignlanguage}[2]{{%
\expandafter\ifx\csname l@#1\endcsname\relax
\typeout{** WARNING: IEEEtran.bst: No hyphenation pattern has been}%
\typeout{** loaded for the language `#1'. Using the pattern for}%
\typeout{** the default language instead.}%
\else
\language=\csname l@#1\endcsname
\fi
#2}}
\providecommand{\BIBdecl}{\relax}
\BIBdecl

\bibitem{citeulike:2767438}
J.~D. Owens, M.~Houston, D.~Luebke, S.~Green, J.~E. Stone, and J.~C. Phillips,
  ``{GPU Computing},'' \emph{Proceedings of the IEEE}, vol.~96, no.~5, pp.
  879--899, May 2008.

\bibitem{Nickolls:2010:GCE:1803935.1804055}
J.~Nickolls and W.~J. Dally, ``The gpu computing era,'' \emph{IEEE Micro},
  vol.~30, no.~2, pp. 56--69, Mar. 2010.

\bibitem{nvidia_cuda_guide}
Nvidia-Corporation, \emph{Nvidia CUDA C Programming Guide}, 2012.

\bibitem{opencl08}
{Khronos OpenCL Working Group}, \emph{The OpenCL Specification, version
  1.0.29}, 8 December 2008.

\bibitem{ConwaysLife}
M.~Gardner, ``{The fantastic combinations of John Conway's new solitaire game
  ``life''},'' \emph{Scientific American}, vol. 223, pp. 120--123, Oct. 1970.

\bibitem{Ying:2011:GDD:2065356.2065583}
Z.~Ying, X.~Lin, S.~C.-W. See, and M.~Li, ``Gpu-accelerated dna distance matrix
  computation,'' in \emph{Proceedings of the 2011 Sixth Annual ChinaGrid
  Conference}, ser. CHINAGRID '11.\hskip 1em plus 0.5em minus 0.4em\relax
  Washington, DC, USA: IEEE Computer Society, 2011, pp. 42--47.

\bibitem{Li:2010:CME:1955604.1956601}
Q.~Li, V.~Kecman, and R.~Salman, ``A chunking method for euclidean distance
  matrix calculation on large dataset using multi-gpu,'' in \emph{Proceedings
  of the 2010 Ninth International Conference on Machine Learning and
  Applications}, ser. ICMLA '10.\hskip 1em plus 0.5em minus 0.4em\relax
  Washington, DC, USA: IEEE Computer Society, 2010, pp. 208--213.

\bibitem{springerlink_gustavson}
F.~Gustavson, ``New generalized data structures for matrices lead to a variety
  of high performance algorithms,'' in \emph{Parallel Processing and Applied
  Mathematics}, ser. Lecture Notes in Computer Science, R.~Wyrzykowski,
  J.~Dongarra, M.~Paprzycki, and J.~Wasniewski, Eds.\hskip 1em plus 0.5em minus
  0.4em\relax Springer Berlin / Heidelberg, 2006, vol. 2328, pp. 418--436.

\bibitem{Jung2008}
D.~P.~O. Jin Hyuk~Jung, ``Exploiting structure of symmetric or triangular
  matrices on a gpu,'' Tech. Rep., 2008.

\bibitem{Ries:2009:TMI:1654059.1654069}
F.~Ries, T.~De~Marco, M.~Zivieri, and R.~Guerrieri, ``Triangular matrix
  inversion on graphics processing unit,'' in \emph{Proceedings of the
  Conference on High Performance Computing Networking, Storage and Analysis},
  ser. SC '09.\hskip 1em plus 0.5em minus 0.4em\relax New York, NY, USA: ACM,
  2009, pp. 9:1--9:10.

\bibitem{AvrilGA12}
Q.~Avril, V.~Gouranton, and B.~Arnaldi, ``Fast collision culling in large-scale
  environments using gpu mapping function,'' in \emph{EGPGV}, 2012, pp. 71--80.

\bibitem{Peelle:1974:TNS:585882.585889}
H.~A. Peelle, ``To teach {N}ewton's square root algorithm,'' \emph{SIGAPL APL
  Quote Quad}, vol.~5, no.~4, pp. 48--50, Dec. 1974.

\bibitem{Ypma:1995:HDN:222504.222510}
T.~J. Ypma, ``Historical development of the {N}ewton-{R}aphson method,''
  \emph{SIAM Rev.}, vol.~37, no.~4, pp. 531--551, Dec. 1995.

\end{thebibliography}
%


\end{document}